\patchcmd{\upbracefill}{\m@th}{\scriptstyle\m@th}{}{}
\patchcmd{\upbracefill}{$\braceld$}{$\scriptstyle\braceld$}{}{}
\patchcmd{\upbracefill}{\bracelu}{\bracelu\mkern-1mu}{}{}
\patchcmd{\upbracefill}{\hfill\braceru}{\hfill\mkern-1mu\braceru}{}{}
\def\sttf2#1#2{\left[\!\!\left[#1\atop#2\right]\!\!\right]}
\def\stss2#1#2{\left\{\!\!\left\{#1\atop#2\right\}\!\!\right\}}
\DeclareMathAlphabet{\mathcalligra}{T1}{calligra}{m}{n}
\DeclareMathAlphabet{\mathgoth}{OT1}{goth}{m}{n}
\newtheorem{theorem}{Theorem}[section]
\newtheorem{definition}[theorem]{Definition}
\newtheorem{thm}[theorem]{Theorem}
\newtheorem{prop}[theorem]{Proposition}
\newtheorem{lemma}[theorem]{Lemma}
\newtheorem{remark}[theorem]{Remark}
\title{Type-$B$ analogue of Bell numbers\\ using Rota's Umbral calculus approach}
\author{
Eli Bagno\\
\institute{Jerusalem College of Technology,\\ 21 HaVaad HaLeumi St.,\\ Jerusalem, Israel,\\ and  Michlalah College Jerusalem,\\ 36 Barukh Duvdevani\\ St., Jerusalem, Israel}
\email{bagnoe@g.jct.ac.il}
\and
\medskip
David Garber\\
\institute{Holon Institute of Technology,\\ 52 Golomb St., P.O.Box 305,\\ 5810201 Holon, Israel}
\email{garber@hit.ac.il}
}
\begin{document}

\maketitle

\begin{abstract}
Rota used the functional $L$ to recover old properties and obtain some new formulas for the Bell numbers.
Tanny used Rota's functional $L$ and the celebrated Worpitzky identity to obtain some expression for the ordered Bell numbers, which can be seen as an evident to the fact that the ordered Bell numbers are gamma-positive. In this paper, we extend some of Rota's and Tanny's results to the framework of the set partitions of Coxeter type $B$.
\end{abstract}

\section{Introduction}
Rota \cite{Rota} declares:

\begin{adjustwidth}{1cm}{}
``It is the author's conviction that formula
(4), which we derive below, is the natural description of the exponential numbers. The basic idea is a general one, and can be applied to a variety of other
combinatorial investigations. We shall see that it easily leads to quick derivations of the properties of the $B_n$.''
\end{adjustwidth}

The `exponential numbers' mentioned in the cited paragraph are the obsolete name for what we call today {\it Bell numbers}, which count set partitions of the set $[n]=\{1,\dots,n\}$. Rota's
Formula (4) reads: $B_n=L(u^n)$, where $L$ is a linear functional defined on the vector space of polynomials in the indeterminate $u$ and $B_n$ is the $n$-th Bell number.

Rota \cite{Rota} used the functional $L$ to recover old properties and obtain some new formulas for the Bell numbers.
Explicitly, let $V = \mathbb{R}[u]$ be the vector space of all real polynomials
in the single variable $u$. Then any sequence of polynomials of degree $0, 1, 2, \dots $ is a basis for this vector space, in particular the sequence of the {\it falling factorials} $(u)_0 = 1, (u)_1 = u, (u)_2 = u(u- 1),\cdots$ is a basis as well.  Let $L : \mathbb{R}[u] \to \mathbb{R}$ be the linear functional that is uniquely defined by
$L((u)_j) = 1$,
for all $j \geq 0$. Rota \cite{Rota} states the following theorem:
\begin{thm}\label{theorem of Rota}
Let $n \geq 0$. Then:\\
(1) $B_n = L(u^n)$,\\
(2) $B_{n+1} = \sum\limits_{j=0}^n {n \choose j} B_j$, \\
(3) $B_n = \frac{1}{e} \sum\limits_{j \geq 0} \frac{j^n}{j!}$.
\end{thm}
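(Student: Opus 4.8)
The plan is to handle the three parts in turn, using only the linearity of $L$ together with the defining normalization $L((u)_j) = 1$.

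For part~(1) I would start from the classical change-of-basis identity
\[
u^n \;=\; \sum_{k=0}^{n} S(n,k)\,(u)_k ,
\]
where $S(n,k)$ is the Stirling number of the second kind; this follows by an easy induction on $n$ from $u\cdot(u)_k = (u)_{k+1} + k\,(u)_k$, or may simply be cited. Applying $L$ term by term and using $L((u)_k)=1$ gives $L(u^n) = \sum_{k=0}^{n} S(n,k)$, which is precisely the combinatorial description of $B_n$ as the number of partitions of $[n]$ into nonempty blocks.

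For part~(2) the crucial ingredient is the ``umbral shift'' property $L\big(u\,p(u)\big) = L\big(p(u+1)\big)$ for all $p\in\mathbb{R}[u]$. It suffices to verify this on the falling-factorial basis: on one hand $u\,(u)_j = (u)_{j+1} + j\,(u)_j$, so $L(u\,(u)_j) = 1+j$; on the other hand $(u+1)_j = (u)_j + j\,(u)_{j-1}$ (with the case $j=0$ checked directly), so $L((u+1)_j) = 1+j$ as well. Granting this and using part~(1), I would write $B_{n+1} = L(u^{n+1}) = L\big(u\cdot u^n\big) = L\big((u+1)^n\big)$, then expand $(u+1)^n = \sum_{j=0}^{n}\binom{n}{j}u^j$ by the binomial theorem and apply $L$ once more to obtain $\sum_{j=0}^{n}\binom{n}{j}L(u^j) = \sum_{j=0}^{n}\binom{n}{j}B_j$.

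For part~(3), which is Dobinski's formula, the strategy is to produce a second concrete realization of the same functional. Define $\widetilde L\colon \mathbb{R}[u]\to\mathbb{R}$ by $\widetilde L(p) = \tfrac1e\sum_{j\ge 0}\tfrac{p(j)}{j!}$; the series converges absolutely for each polynomial $p$, so $\widetilde L$ is a well-defined linear functional. A short computation using $(j)_k = j!/(j-k)!$ for $j\ge k$ (and $(j)_k = 0$ for $0\le j<k$) shows $\widetilde L\big((u)_k\big) = \tfrac1e\sum_{j\ge k}\tfrac{1}{(j-k)!} = \tfrac1e\cdot e = 1$ for every $k$, so $\widetilde L$ and $L$ agree on a basis of $\mathbb{R}[u]$ and hence coincide. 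Evaluating at $u^n$ then yields $B_n = L(u^n) = \widetilde L(u^n) = \tfrac1e\sum_{j\ge 0}\tfrac{j^n}{j!}$.

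I do not anticipate a serious obstacle; the only points needing a little care are purely formal — verifying the shift identity $(u+1)_j = (u)_j + j\,(u)_{j-1}$ (and its degenerate case) in part~(2), and justifying that $L$ may be interchanged with the infinite sum defining $\widetilde L$, which comes down to the absolute convergence of $\sum_j j^n/j!$ in part~(3). Everything else is a direct consequence of linearity of $L$ and the normalization $L((u)_j)=1$.
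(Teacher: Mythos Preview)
Your argument is correct in all three parts. Note, however, that the paper does not supply its own proof of this theorem: it is quoted as Rota's result with a citation, and the paper's contribution is the type-$B$ analogue (Theorem~\ref{L(x^n)}). That said, your proof of parts~(1) and~(2) is precisely the type-$A$ template that the paper follows in the type-$B$ setting: part~(1) via the Stirling change-of-basis formula and linearity of $L$, and part~(2) via an umbral shift identity (the paper's Lemma~\ref{property of L}, $L((x-1)p(x-2))=u\cdot L(p(x))$, is the type-$B$ counterpart of your $L(u\,p(u))=L(p(u+1))$, proved the same way on the falling-factorial basis) followed by a binomial expansion. The paper offers no analogue or proof of part~(3), so there is nothing to compare there; your Dobinski argument via a second realization $\widetilde L$ of the functional is standard and sound.
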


Tanny \cite{Tanny} used Rota's functional $L$ and the celebrated Worpitzky identity $m^n=\sum\limits_{k=0}^n{{n+m-k \choose {n}}}a_{n,k},$ to obtain the following expression for the ordered Bell numbers :
$$F_n(x)=\sum\limits_{k=1}^n a_{n,k}x^{n-k+1}(1+x)^{k-1},$$ where $F_n(x)$ is the generating function of the number of ordered set partitions of the set $[n]$ and $a_{n,k}$ are Eulerian numbers, a.k.a. the number of permutations in the symmetric group $S_n$ having $k$ descents, see \cite{Pet-Book}.
In modern terms, this expression can be seen as an evident to the fact that the ordered Bell numbers are gamma-positive.

\medskip

In this paper, we extend some of Rota's and Tanny's results to the framework of the set partitions of Coxeter type $B$.

In Section \ref{Background}, we recall the definition of set partitions (and ordered set partitions) of type $B$ and introduce the definition of Bell numbers and Bell polynomials of type $B$. In Section \ref{sec Rota type B}, we extend the first two parts of Rota's Theorem \ref{theorem of Rota} to Bell numbers of type $B$. Actually part $(2)$ of that result will be proven both by Rota's method and by a counting argument.
In Section \ref{sec Tanny gamma positivity type B}, we use Brenti's generalization of Worpitzky's identity to obtain a gamma-positivity result for the ordered Bell polynomials of type $B$.

\section{Set partitions and Bell numbers of type
{\it B}} \label{Background}

\subsection{Set partitions of type {\it B}}
We now recall the definition of set partitions of type $B$ (see Dolgachev-Lunts \cite[p.~755]{DoLu} and Reiner \cite[Section 2]{R}; mentioned implicitly in Dowling \cite{Dow} and Zaslavsky \cite{Za} in the form of signed graphs):

\begin{definition}\label{def of type B par}
Denote: $[\pm n]:=\{\pm1,\dots,\pm n\}$. A {\em set partition of $[n]$ of type $B$} or a {\em signed set partition} is a set partition of  the set $[\pm n]$  such that the following conditions are satisfied:
\begin{itemize}
\item If $B$ appears as a block in the set partition, then $-B$ (which is obtained from $B$ by negating all its elements) also appears in that partition.
\item There exists at most one block satisfying $-B=B$. This block is called the {\em zero block} (if it  exists, it is a subset of $[\pm n]$ of the form $\{\pm i \mid i \in C\}$ for some $C \subseteq [n]$).
\end{itemize}
\end{definition}

For example, the following is a set partition of $[6]$ of type $B$:
\begin{equation*}\label{example of standard presentation}
\{\{1,-1,4,-4\},\{2,3,-5\},\{-2,-3,5\},\{6\},\{-6\}\}.
\end{equation*}

Note that every non-zero block $B$ has a corresponding block $-B$ attached to it. For the sake of convenience, we write for the pair of blocks $B,-B$, only the representative block containing the minimal {\it positive} number appearing in $B \cup -B$.
For example, the pair of blocks
$B=\{-2,-3,5\},-B=\{2,3,-5\}$ will be represented by the single block $\{\mathbf{2},3,-5\}$.

Our convention will be to write first the zero block and denote it by $B_0$ if exists and then the non-zero blocks of a set partition of type $B$ in such a way that the sequence of absolute values of the minimal elements of the blocks is increasing. We call this the {\it standard presentation}.

For example, the following is a set partition of $[6]$ of type $B$ in its standard presentation:
$$\left\{B_0=\{1,-1,4,-4\},B_1=\{\mathbf{2},3,-5\},B_2=\{\mathbf{6}\}\right\}.$$

\subsection{Stirling numbers and Bell numbers of type {\it B}}

\begin{definition} \label{definition of Stirling number of type B second kind}
Let $S^B(n,k)$ be the number of set partitions of type $B$ having $k$ representative non-zero blocks. This is known as {\em the Stirling number of type $B$ of the second kind} (see sequence A085483 in OEIS \cite{OEIS}).
\end{definition}

It is easy to see that $S^B(n,n)=S^B(n,0)=1$ for each $n\geq 0$.  The following recursion for $S^B(n,k)$ is well-known (\cite[Theorem 7; see the Erratum]{Dow}, \cite[Corollary 3]{Beno}, for $m=2$, and \cite[Equation (1)]{Wa}, for $m=2,c=1$):

\begin{prop}\label{recursion theorem}
For each $1 \leq k < n$,
\begin{equation}
S^B(n,k)=S^B(n-1,k-1)+(2k+1)S^B(n-1,k).
\label{recursion_second_kind_type_b}
\end{equation}
\end{prop}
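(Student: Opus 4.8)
The plan is to prove the recursion by a direct combinatorial argument, conditioning on the fate of the element $n$ in a signed set partition of $[\pm n]$ with exactly $k$ representative non-zero blocks. First I would set up the bijective framework: given such a partition $\pi$ of $[\pm n]$, look at the block containing $n$ (equivalently, by the sign-symmetry, the block containing $-n$ is its negative). There are exactly two mutually exclusive cases, and I would count each separately, showing that together they account for all of $S^B(n,k)$.

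The first case is that $\{n,-n\}$ together form a singleton pair of non-zero blocks, i.e. $\{n\}$ and $\{-n\}$ are each their own block and neither is the zero block. Deleting this pair leaves a signed set partition of $[\pm(n-1)]$ with exactly $k-1$ representative non-zero blocks, and conversely any such partition extends uniquely by re-adjoining $\{n\},\{-n\}$. This gives the $S^B(n-1,k-1)$ term. The second case is that $n$ lies in a block $B$ with $|B|\ge 2$ (after removing $\pm n$ from $B$, something remains) — this includes the possibility that $n$ is in the zero block. Here I would argue that removing $n$ from its block (and $-n$ from $-B$, or from the zero block in the symmetric way) yields a signed set partition of $[\pm(n-1)]$ still having exactly $k$ representative non-zero blocks, and I must count how many ways the pair $\pm n$ can be re-inserted into a given partition $\sigma$ of $[\pm(n-1)]$ with $k$ representative non-zero blocks so that $n$ ends up in a block of size $\ge 2$. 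The element $n$ can be placed into any of the $k$ representative non-zero blocks (forcing $-n$ into the matching negative block), or into the negative of any of those $k$ blocks (i.e. placing $-n$ with the representative and $n$ with its negative), or into the zero block — which in type $B$ always "exists" as an option in the sense that we may create it if absent or enlarge it if present, but because adding $\pm n$ to the zero block in the two possible sign-assignments gives the same block $\{\ldots,n,-n\}$, this contributes exactly $1$ rather than $2$. Summing: $k + k + 1 = 2k+1$ choices, giving the $(2k+1)S^B(n-1,k)$ term.

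Combining the two cases yields $S^B(n,k) = S^B(n-1,k-1) + (2k+1)S^B(n-1,k)$, as claimed. The main obstacle, and the point requiring the most care, is the bookkeeping around the zero block in the second case: one must verify that the "insert $\pm n$ into the zero block" operation is well-defined and injective whether or not a zero block was already present in $\sigma$, and that it genuinely contributes a single option (not two) because the zero block is unordered and sign-symmetric by definition — this is exactly the asymmetry that produces the $+1$ in $2k+1$ rather than the $+2$ one might naively expect. A secondary check is that in the second case the number of representative non-zero blocks is truly preserved (it is, since we never create or destroy a non-zero block, only change sizes), so that the induction is bucketed correctly. Alternatively, one could instead derive the recursion algebraically from a known generating function identity for $S^B(n,k)$, but the combinatorial route is cleaner and self-contained given what has been set up.
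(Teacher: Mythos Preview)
Your combinatorial argument is correct and is the standard bijective proof of this recursion. Note, however, that the paper itself does not supply a proof of this proposition: it is stated as ``well-known'' with references to Dowling, Benoumhani, and Wang, and no argument is given in the text. So there is no in-paper proof to compare against; your write-up would serve perfectly well as the omitted justification. One small phrasing issue: your parenthetical ``after removing $\pm n$ from $B$, something remains'' is not literally true when $B$ is the zero block equal to $\{n,-n\}$, but you correctly address exactly this point in your later discussion of the zero-block bookkeeping (creating the zero block if absent, enlarging it if present), so the argument stands.
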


The following result was proved combinatorially in \cite{monthly} using a `balls into urns' approach:

\begin{thm}\label{balls to urns}
Let $x \in \mathbb{R}$ and let $n \in \mathbb{N}$. Then we have:
\begin{equation}
x^n=\sum\limits_{k=0}^n S^B (n,k)(x)^B_k,\label{B}
\end{equation}
where $(x)^B_k:=(x-1)(x-3)\cdots (x-2k+1)$ and $(x)^B_0:=1$, called {\em the falling factorial of type $B$}.
\end{thm}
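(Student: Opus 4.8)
The plan is to prove~\eqref{B} by induction on $n$, combining the recursion for $S^B(n,k)$ from Proposition~\ref{recursion theorem} with a one-step recursion for the type-$B$ falling factorial. Both sides of~\eqref{B} are polynomials in $x$, so it suffices to prove it as a polynomial identity, and induction on $n$ is well suited to this. The base case $n=0$ is the tautology $x^0=S^B(0,0)(x)^B_0=1$, since $S^B(0,0)=1$.

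For the inductive step, the key elementary fact is that, straight from the definition $(x)^B_{k+1}=(x)^B_k\cdot(x-2k-1)$,
\[
x\,(x)^B_k=\bigl[(x-2k-1)+(2k+1)\bigr](x)^B_k=(x)^B_{k+1}+(2k+1)(x)^B_k .
\]
Assuming~\eqref{B} with $n-1$ in place of $n$, I would multiply it by $x$ and substitute this relation to obtain
\[
x^n=\sum_{k=0}^{n-1}S^B(n-1,k)\,x\,(x)^B_k
=\sum_{k=0}^{n-1}S^B(n-1,k)\bigl[(x)^B_{k+1}+(2k+1)(x)^B_k\bigr].
\]
Shifting the index in the first sum ($j=k+1$) and collecting the coefficient of $(x)^B_k$ gives $S^B(n-1,k-1)+(2k+1)S^B(n-1,k)$ for $1\le k\le n-1$, which is $S^B(n,k)$ by Proposition~\ref{recursion theorem}, while the extreme coefficients (of $(x)^B_0$ and of $(x)^B_n$) are $S^B(n-1,0)=1=S^B(n,0)$ and $S^B(n-1,n-1)=1=S^B(n,n)$. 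Hence $x^n=\sum_{k=0}^{n}S^B(n,k)(x)^B_k$, completing the induction. I do not anticipate a genuine obstacle; the one point requiring care is the bookkeeping at the boundary indices $k=0$ and $k=n$, where Proposition~\ref{recursion theorem} does not apply and one must instead invoke the boundary values $S^B(n,n)=S^B(n,0)=1$.

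A second route, in the spirit of the ``balls into urns'' proof of~\cite{monthly}, is bijective and goes through the specialization $x=2m+1$ with $m$ a positive integer. Then $(2m+1)^n$ counts functions $f\colon[n]\to\{0,\pm1,\dots,\pm m\}$, each of which extends to $[\pm n]$ via $f(-i):=-f(i)$; the nonempty fibers of this extension form a set partition of $[\pm n]$ of type $B$, with the (symmetric) fiber over $0$ as the zero block when it is nonempty, and the fibers over $j$ and $-j$ pairing up for $j\neq0$. Conversely, a given type-$B$ partition with $k$ representative non-zero blocks arises from exactly $2^k\,m(m-1)\cdots(m-k+1)$ such functions: there are $m(m-1)\cdots(m-k+1)$ injections assigning distinct value-pairs $\{\pm j\}$ to the $k$ block-pairs, and then $2^k$ choices orienting each block-pair against its value-pair; and this count is precisely $(x)^B_k$ evaluated at $x=2m+1$. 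Summing over $k$ proves~\eqref{B} for every odd positive integer $x$, hence for all $x$ as both sides are polynomials. Here the point to check is that the nonempty fibers really do obey the two defining conditions of a type-$B$ partition --- equivalently, that no fiber over a nonzero value is symmetric and that the fibers over $j$ and over $-j$ are empty together --- which is immediate from $f(-i)=-f(i)$.
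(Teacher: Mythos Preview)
The paper does not supply its own proof of Theorem~\ref{balls to urns}; it merely records that the identity was established combinatorially in~\cite{monthly} by a ``balls into urns'' argument. Your second route is exactly that argument: interpreting $(2m+1)^n$ as functions $f\colon[n]\to\{0,\pm1,\dots,\pm m\}$, reading off the type-$B$ partition from the fibers of the extension $f(-i)=-f(i)$, and then counting the $2^k m(m-1)\cdots(m-k+1)=(2m+1)^B_k$ functions over each partition. So on that front your proposal and the paper (via its reference) coincide.

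Your first route---induction on $n$ using $x\,(x)^B_k=(x)^B_{k+1}+(2k+1)(x)^B_k$ together with Proposition~\ref{recursion theorem}---is a genuinely different, purely algebraic proof that the paper does not mention. It has the virtue of being self-contained within the present paper: it needs only the recursion~\eqref{recursion_second_kind_type_b} and the boundary values $S^B(n,0)=S^B(n,n)=1$, all stated here, whereas the combinatorial proof requires importing the bijective machinery from~\cite{monthly}. Conversely, the combinatorial proof explains \emph{why} the type-$B$ falling factorial is the right object (it literally counts the ways to place the block-pairs into value-pairs), something the inductive argument obscures. Both arguments are correct as written; the boundary bookkeeping at $k=0$ and $k=n$ in the inductive proof is handled properly.
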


\begin{remark}\label{connection}
There is a simple connection between the falling factorials of types $A$ and $B$:
$$\left(\frac{x-1}{2}\right)_n= \left(\frac{x-1}{2}\right)\left(\frac{x-1}{2}-1\right)\cdots \left(\frac{x-1}{2}-n+1\right)=\frac{1}{2^n}(x-1)(x-3)\cdots (x-2n+1)=\frac{1}{2^n}(x)_n^B,$$
where {\em the falling factorial of type $A$} is defined as follows:
$(x)_k:=x(x-1)(x-2)\cdots (x-k+1)$.
\end{remark}

\medskip

We define the {\it Bell number of type $B$} as follows:
$B_n^B=\sum\limits_{k=0}^nS^B(n,k).$
Obviously, the Bell number counts all the set partitions of the set $[ n]$ of type $B$. A similar definition appears in Mez\H{o} and Ram\'irez \cite{MeRa}.

\medskip

We define also the {\it Bell polynomial of type $B$} as follows:
$B_n^B(u)=\sum\limits_{k=0}^nS^B(n,k)u^k.$

\medskip

Sagan and Swanson \cite{SaSw} discussed {\it  ordered set partitions of type $B$}, which are defined as follows:

\begin{definition}
An {\em ordered set partition of $[n]$ of type $B$ having $k$ non-zero blocks} is a sequence of sets $(S_0,S_1,\dots, S_{2k+1})$ which form a set partition of $[n]$ of type $B$, such that the following two order conditions are satisfied:

\begin{enumerate}
\item For each $1\leq i \leq n$, $i\in S_0$ if and only if $-i\in S_0$  (i.e. the zero block $S_0$ is always the first block).
\item For each $1 \leq j \leq k$, we have $S_{2j}=-S_{2j-1}$.
\end{enumerate}
\end{definition}

Similar to the ordinary Bell number of type $B$, we define the {\it ordered Bell number of type $B$} and its associated {\it ordered Bell polynomial} as follows:

\begin{definition}
 $B_n^{B,o}=\sum\limits_{k=0}^n { 2^k k! S^B(n,k)}$ and
 $B_n^{B,o}(x)=\sum\limits_{k=0}^n {2^k k! S^B(n,k)x^k}$
\end{definition}

\section{Type-{\it B} analogue of Rota's result}\label{sec Rota type B}

Let $V$ be the vector space of all polynomials in the
variable $x$. It is easy to see that the set
$\left\{(x)^B_k\right\}_{k\in \mathbb{Z}}$ is a basis of $V$. We use this basis to define a functional $L:V \rightarrow \mathbb{R}$ by
$L((x)_k^B)=u^k$ for each $k\in \mathbb{Z}$, where $u$ is a fixed real number.

\medskip

We will use the following lemma in the proof of Theorem \ref{L(x^n)}(2):

\begin{lemma}\label{property of L}
Let $p(x)$ be any polynomial. Then:
$L((x-1)p(x-2))=u\cdot L(p(x)).$
\end{lemma}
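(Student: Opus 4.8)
The plan is to verify the identity on the basis $\{(x)^B_k\}_{k\geq 0}$ and then extend by linearity, since $L$ is linear and both sides of the claimed identity depend linearly on $p$. So it suffices to prove that $L((x-1)\cdot (x-2)^B_k) = u\cdot L((x)^B_k) = u\cdot u^k = u^{k+1}$ for every $k\geq 0$, where $(x-2)^B_k$ denotes the polynomial $(x)^B_k$ evaluated at $x-2$.

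The key computation is to recognize $(x-1)\cdot (x-2)^B_k$ as a falling factorial of type $B$. Unwinding the definition, $(x-2)^B_k = \big((x-2)-1\big)\big((x-2)-3\big)\cdots\big((x-2)-2k+1\big) = (x-3)(x-5)\cdots(x-2k-1)$. Multiplying by $(x-1)$ on the left gives $(x-1)(x-3)(x-5)\cdots(x-2k-1)$, which is exactly $(x)^B_{k+1}$, since $(x)^B_{k+1} = (x-1)(x-3)\cdots(x-2(k+1)+1) = (x-1)(x-3)\cdots(x-2k-1)$. Therefore $L\big((x-1)(x-2)^B_k\big) = L\big((x)^B_{k+1}\big) = u^{k+1} = u\cdot u^k = u\cdot L\big((x)^B_k\big)$, as desired.

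To finish, I would write a general polynomial $p$ in the basis as $p(x) = \sum_{k\geq 0} c_k (x)^B_k$ (a finite sum), note that $p(x-2) = \sum_k c_k (x-2)^B_k$, hence $(x-1)p(x-2) = \sum_k c_k (x-1)(x-2)^B_k$, and apply $L$ termwise using linearity together with the identity just established: $L\big((x-1)p(x-2)\big) = \sum_k c_k u^{k+1} = u\sum_k c_k u^k = u\, L(p(x))$.

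I do not anticipate a genuine obstacle here; the only thing to be careful about is the bookkeeping of the arithmetic progression of shifts — making sure that substituting $x\mapsto x-2$ into $(x)^B_k$ and then prepending the factor $(x-1)$ reproduces precisely the progression $1,3,5,\dots,2k+1$ defining $(x)^B_{k+1}$, with no off-by-one error at either end. One should also remark that the statement is phrased for "any polynomial" but the basis is indexed by $k\in\mathbb{Z}$ in the paper's setup; since every polynomial is a finite combination of the $(x)^B_k$ with $k\geq 0$, restricting attention to nonnegative $k$ loses nothing.
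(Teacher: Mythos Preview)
Your proof is correct and follows essentially the same approach as the paper: both arguments rest on the observation that $(x-1)\cdot(x-2)^B_k=(x)^B_{k+1}$, so that expanding $p$ in the basis $\{(x)^B_k\}$ and applying $L$ termwise shifts each $u^k$ to $u^{k+1}$. The only difference is presentational --- you isolate and verify the identity on a single basis element before invoking linearity, whereas the paper writes out the full linear combination explicitly and performs the substitution and multiplication in one pass.
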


\begin{proof}
Since the set of falling factorials of type $B$\\
\centerline{$\{1,(x-1),(x-1)(x-3),\dots,(x-1)(x-3)\cdots (x-2n+1)\}$}
is a basis for $\mathbb{R}_n[x]$, we can write any polynomial $p(x)$ as a linear combination of this set as follows:
\begin{equation}\label{RHS}
p(x)=a_0+a_1(x-1)+a_2(x-1)(x-3)+a_n(x-1)(x-3)\cdots (x-2n+1).
\end{equation}
Substituting $x-2$ for $x$, we have now:\\
\centerline{$p(x-2)=a_0+a_1(x-3)+a_2(x-3)(x-5)+\cdots+a_n(x-3)(x-5)\cdots (x-2n+3).$}
Multiplying the last equality by $(x-1)$ yields:
\begin{equation}\label{LHS}
(x-1)p(x-2)=a_0(x-1)+a_1(x-1)(x-3)+\cdots+ a_n(x-1)\cdots (x-2n+3).
\end{equation}
Applying the operator $L$ on Equations (\ref{RHS}) and (\ref{LHS}), we get the desired result.
\end{proof}

We present here the analogue for type $B$ of the celebrated results by Rota \cite{Rota} (for part (2), see also  Mez\H{o} and Ram\'irez \cite[p. 258]{MeRa}):

\begin{thm}\label{L(x^n)}
For each $n\in \mathbb{N}$, we have:\\
(1) $B_n^B(u)=L(x^n)$.\\
(2) $B_{n+1}^B(u)=B_n^B(u)+ u \cdot \sum\limits_{j=0}^n{2^{n-j}{n \choose j} B_j^B(u)}.$
\end{thm}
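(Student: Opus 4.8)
The plan is to prove part (1) directly from the definitions and Theorem \ref{balls to urns}, and then to deduce part (2) by applying the functional $L$ to a suitable polynomial identity, using Lemma \ref{property of L} as the key tool.

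For part (1), I would start from the Worpitzky-type identity \eqref{B}, namely $x^n=\sum_{k=0}^n S^B(n,k)(x)^B_k$. Applying the linear functional $L$ to both sides and using linearity together with the defining property $L((x)^B_k)=u^k$, I immediately get $L(x^n)=\sum_{k=0}^n S^B(n,k)u^k$, which is exactly the definition of $B_n^B(u)$. So part (1) is essentially a one-line consequence of the setup; the only thing to be careful about is that the identity in Theorem \ref{balls to urns} holds as a polynomial identity in $x$, so $L$ may legitimately be applied termwise.

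For part (2), the strategy is to find a polynomial identity whose image under $L$ yields the recurrence. The natural starting point is $x^{n+1}=x\cdot x^n$, and I want to rewrite the right-hand side so that Lemma \ref{property of L} (which converts $L((x-1)p(x-2))$ into $u\cdot L(p(x))$) can be invoked. Writing $x = (x-1)+1$ and more generally engineering a shift of the argument by $2$, I expect to manipulate $x\cdot x^n$ into a sum of a term not carrying a factor $(x-1)$ (which will produce the lone $B_n^B(u)$ summand after applying $L$ and part (1)) plus a term of the form $(x-1)\cdot q(x-2)$ for an explicit polynomial $q$. The binomial coefficients ${n\choose j}$ and the powers $2^{n-j}$ should emerge from expanding something like $x^n = ((x-2)+2)^n = \sum_{j=0}^n {n\choose j} 2^{n-j}(x-2)^j$, after the argument has been shifted; applying $L$ to $(x-1)(x-2)^j$ then gives $u\cdot L(x^j) = u\cdot B_j^B(u)$ by Lemma \ref{property of L} and part (1).

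The main obstacle will be getting the bookkeeping of the shifts exactly right: I need to write $x^{n+1}$ so that precisely one factor of $(x-1)$ is pulled out, the remaining degree-$n$ polynomial is expressed in the variable $x-2$ (not $x$), and the leftover constant/lower-order contribution collapses to the single $B_n^B(u)$ term rather than something messier. Concretely, I anticipate using $x^{n+1} = \bigl((x-1)+1\bigr)x^n = (x-1)x^n + x^n$, then rewriting $(x-1)x^n$ as $(x-1)\bigl((x-2)+2\bigr)^n = (x-1)\sum_{j=0}^n {n\choose j}2^{n-j}(x-2)^j$, so that applying $L$ gives $L(x^{n+1}) = \sum_{j=0}^n {n\choose j}2^{n-j}\,u\,L(x^j) + L(x^n)$; invoking part (1) on each term then yields $B_{n+1}^B(u) = u\sum_{j=0}^n 2^{n-j}{n\choose j}B_j^B(u) + B_n^B(u)$, as desired. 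The one subtlety to double-check is that Lemma \ref{property of L} applies with $p(x)=x^j$, i.e.\ that $L\bigl((x-1)(x-2)^j\bigr) = u\cdot L(x^j)$ — this is exactly the content of the lemma with $p(x-2)=(x-2)^j$, so it goes through cleanly.
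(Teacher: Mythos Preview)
Your proposal is correct and follows essentially the same route as the paper: part (1) is obtained by applying $L$ to the identity of Theorem~\ref{balls to urns}, and part (2) comes from writing $x^{n+1}=x^n+(x-1)x^n$, invoking Lemma~\ref{property of L}, and binomial-expanding. The only cosmetic difference is the order of operations---the paper applies the lemma once with $p(x)=(x+2)^n$ to get $L((x-1)x^n)=u\,L((x+2)^n)$ and then expands, whereas you expand $x^n=((x-2)+2)^n$ first and apply the lemma term by term; the paper also supplements this with a short combinatorial proof of the case $u=1$.
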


\begin{proof}
(1) By Theorem \ref{balls to urns}, we have $x^n=\sum\limits_{P \in \Pi_n^B}(x)_{N(P)}^B$, where $\Pi_n^B$ is the set of set partitions of the set $[n]$ of type $B$ and $N(P)$ is the number of non-zero representative blocks in the set partition $P$. Now apply the functional $L$ on both sides of this equation and use the linearity of $L$ to get $$L(x^n)=\sum\limits_{P \in \Pi_n^B}L\left((x)_{N(P)}^B\right)=\sum\limits_{P \in \Pi_n^B}u^{N(P)}=\sum\limits_{k=0}^n{S^B(n,k)u^k}=B_n^B(u).$$

\noindent
(2) For this result, we supply two different proofs: an algebraic one and a combinatorial one for $u=1$.
We start  with the algebraic proof:
Applying Lemma \ref{property of L} to $p(x)=(x+2)^n$, we get:
\begin{eqnarray*}
B_{n+1}^B(u)-B_n^B(u)& = &L(x^{n+1})-L(x^n)=L((x-1)x^n) \stackrel{{\rm Lemma} \ \ref{property of L}}{=} u \cdot L((x+2)^n)=\\
&=&u \cdot L\left(\sum\limits_{j=0}^n {n \choose j} x^j 2^{n-j}\right)=u \cdot \sum\limits_{j=0}^n {n \choose j} 2^{n-j}L(x^j)=u \cdot \sum\limits_{j=0}^n {n \choose j} 2^{n-j}B_j^B(u).
\end{eqnarray*}
The combinatorial proof for $u=1$ is as follows:
The left hand side counts the total number of set partitions of the set $[n]$ of type $B$ for any number of non-zero representative blocks. we show that the right hand side counts the same thing in a different way.
We divide in two cases according to the location of $n+1$:
\begin{enumerate}
    \item If $n+1$ is located in the zero-block, then we have $B_n^B$ possibilities to locate the other elements.
    \item Otherwise, $n+1$ is located in a non-zero block. Then, for each $0\leq k\leq n$, assume that $n-k$ is the number of elements that share a block with $n+1$. The rest $k$ elements can be located in $B_k^B$ ways. Finally, we have $2^{n-k}$ possibilities to sign the elements in the block containing $n+1$.
\end{enumerate}

\end{proof}

\section{Gamma-positivity of ordered Bell polynomial of type {\it B}}\label{sec Tanny gamma positivity type B}

Let $V$ be the vector space of all polynomials in the
variable $x$. It is easy to see that the set
$\left\{(x)^B_k\right\}_{k\in \mathbb{Z}}$ is a basis of $V$. We use this basis to define a new functional $L^o:V \rightarrow \mathbb{R}$ by
\begin{equation}
L^o((x)_k^B)=2^kk!u^k
\label{def ordered L}\end{equation}
for each $k\in \mathbb{Z}$, where $u$ is a fixed real number.
By slightly modifying the proof of Theorem \ref{L(x^n)} we get that $L^o(x^n)=B_n^{B,o}(u)$.

Brenti \cite{Brenti} obtained a Worpitzky-like identity for type $B$ as follows:
For each $m,n\in \mathbb{N}$, one has\\
\centerline{$(1+2m)^n=\sum\limits_{k=0}^n{{n+m-k \choose {n}}}E^B_{n,k},$}
where $E^B_{n,k}$ are the Eulerian numbers of type $B$, which counts the number of signed permutations in the Coxeter group of type $B$ having $k$ descents;
this set of numbers constitutes the sequence A060187 in OEIS \cite{OEIS}. For a combinatorial proof of this identity, see \cite{BGN}.

\medskip

The following result shows the gamma-positivity of the ordered Bell polynomials of type $B$:
\begin{prop}
$B_n^{B,o}(u)=\sum\limits_{k=0}^n E^B_{n,k} u^k (1+u)^{n-k}$.
\end{prop}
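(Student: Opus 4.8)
The plan is to mimic Tanny's derivation in type $A$: push Brenti's type-$B$ Worpitzky identity through the functional $L^o$, after a substitution that converts $(1+2m)^n$ into $x^n$. Since $\binom{n+m-k}{n}=\tfrac1{n!}(m+n-k)(m+n-k-1)\cdots(m-k+1)$ is a polynomial in $m$ of degree $n$ and the $E^B_{n,k}$ are constants, Brenti's identity
$$(1+2m)^n=\sum_{k=0}^n\binom{n+m-k}{n}E^B_{n,k}$$
is an identity of polynomials in $m$, hence remains valid for $m=\tfrac{x-1}{2}$. As $1+2\cdot\tfrac{x-1}{2}=x$, this gives the polynomial identity $x^n=\sum_{k=0}^n\binom{n-k+\tfrac{x-1}{2}}{n}E^B_{n,k}$ in the variable $x$.

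Next I would apply $L^o$ to both sides. Using linearity together with the already established identity $L^o(x^n)=B_n^{B,o}(u)$, the proposition reduces to the single evaluation
$$L^o\!\left(\binom{n-k+\tfrac{x-1}{2}}{n}\right)=u^k(1+u)^{n-k}\qquad\text{for }0\le k\le n.$$
To prove this I would reparametrize by $y=\tfrac{x-1}{2}$, noting that $L^o$ is literally the same functional, its description merely simplifying when expanded in the family $\{(\tfrac{x-1}{2})_j\}$: by Remark \ref{connection}, $(y)_j=\tfrac1{2^j}(x)_j^B$ where $(y)_j=y(y-1)\cdots(y-j+1)$ is the ordinary falling factorial, so the defining rule $L^o((x)_j^B)=2^jj!u^j$ becomes $L^o((y)_j)=j!\,u^j$. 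Now $\binom{n-k+\tfrac{x-1}{2}}{n}=\tfrac1{n!}(y+n-k)_n$, and the binomial theorem for falling factorials, $(y+s)_n=\sum_{j=0}^n\binom nj (y)_j\,(s)_{n-j}$ with $s=n-k$, expands this in the $(y)_j$. Since $(n-k)_{n-j}=0$ for $j<k$ (such a product of $n-j$ consecutive integers descending from $n-k\ge0$ reaches $j-k+1\le0$ and hence passes through $0$) while $(n-k)_{n-j}=\tfrac{(n-k)!}{(j-k)!}$ for $j\ge k$, applying $L^o$ term by term and using $\tfrac1{n!}\binom nj\tfrac{(n-k)!}{(j-k)!}j!=\binom{n-k}{j-k}$ collapses the sum to $\sum_{j=k}^n\binom{n-k}{j-k}u^j=u^k(1+u)^{n-k}$.

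Plugging this evaluation back into the expansion of $x^n$ and summing over $k$ yields $B_n^{B,o}(u)=\sum_{k=0}^n E^B_{n,k}u^k(1+u)^{n-k}$, completing the proof. The work here is essentially bookkeeping: the main points to get right are that the substitution $m=\tfrac{x-1}{2}$ is legitimate (it is, both sides of Brenti's identity being polynomials in $m$), that $L^o$ is unchanged under the $y$-reparametrization and only its description simplifies, and — the one genuinely delicate step — that the terms with $j<k$ in the Vandermonde expansion vanish so that the surviving sum telescopes into $u^k(1+u)^{n-k}$; the degenerate cases $k=0$ and $k=n$ deserve a quick separate check.
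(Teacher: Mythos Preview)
Your proposal is correct and follows essentially the same route as the paper: substitute $m=\tfrac{x-1}{2}$ in Brenti's identity, apply $L^o$, expand $(n-k+\tfrac{x-1}{2})_n$ via the Vandermonde/umbral binomial identity for falling factorials, convert $(\tfrac{x-1}{2})_\ell$ to $\tfrac1{2^\ell}(x)^B_\ell$ via Remark~\ref{connection}, and simplify the resulting sum to $u^k(1+u)^{n-k}$. Your explicit justification of why the terms $j<k$ vanish and why the polynomial substitution is legitimate is a bit more careful than the paper's write-up, but the argument is the same.
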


\begin{proof}
If we write $x=1+2m$, then we have:
\begin{eqnarray*}
B_n^{B,o}(u)&=&L^o(x^n) \quad = \quad \sum\limits_{k=0}^n L^o\left[{n+m-k \choose {n}}\right]E^B_{n,k}=\\
& = & \sum\limits_{k=0}^n L^o\left[{n+\frac{x-1}{2}-k \choose {n}}\right]E^B_{n,k}\quad = \quad \sum\limits_{k=0}^n \frac{E^B_{n,k}}{n!} L^o\left[\left(n+\frac{x-1}{2}-k\right)_n\right]=\\
& \stackrel{(*)}{=} & \sum\limits_{k=0}^n \frac{E^B_{n,k}}{n!} L^o\left[\sum\limits_{\ell=0}^n {n \choose \ell}(n-k)_{n-\ell}\left(\frac{x-1}{2}\right)_\ell \right]=\\
& \stackrel{{\rm Rem.} \ \ref{connection}}{=} & \sum\limits_{k=0}^n \frac{E^B_{n,k}}{n!} \sum\limits_{\ell=0}^n {n \choose \ell}(n-k)_{n-\ell}L^o\left[\frac{(x)^B_{\ell}}{2^\ell}\right] =\\
& \stackrel{{\rm Eqn.} \ (\ref{def ordered L})}{=} & \sum\limits_{k=0}^n E^B_{n,k} \sum\limits_{\ell=0}^n {n \choose \ell}\frac{(n-k)_{n-\ell}}{n!}\ell! u^{\ell}
\quad = \quad \sum\limits_{k=0}^n E^B_{n,k} \sum\limits_{\ell=0}^n {n-k \choose n-\ell}u^{\ell} = \\
& = &  \sum\limits_{k=0}^n E^B_{n,k}u^k \sum\limits_{\ell=0}^n {n-k \choose \ell-k}u^{\ell-k}  \ \ =  \ \  \sum\limits_{k=0}^n E^B_{n,k} u^k (1+u)^{n-k},
\end{eqnarray*}

\noindent
where Equality $(*)$ is based on the binomial Umbral identity of the falling factorials of type $A$:\break $(a+b)_n=\sum\limits_{k=0}^n {n \choose k} (a)_k (b)_{n-k}$, see e.g. Roman \cite[p. 29]{Roman}.
\end{proof}

\bibliographystyle{eptcs}
\bibliography{rota}

\end{document}